\newtheorem{theorem}{Theorem}
\newtheorem{lemma}{Lemma}
\newtheorem{corollary}{Corollary}
\theoremstyle{definition}
\newtheorem{Definition}{Definition}
\theoremstyle{remark}
\newtheorem{example}{Example}
\newenvironment{Example}{\begin{example}}{\end{example}}
\author{Alaa Ibrahim\and Bruno Salvy}
\email{Alaa.Ibrahim@inria.fr, Bruno.Salvy@inria.fr}
\address{Univ Lyon, EnsL, UCBL, CNRS, Inria,  LIP, F-69342, LYON Cedex 07, France}
\newcommand{\trsp}[1]{#1^\mathsf{T}} 
\def\PP{\textsf{PositivityProof}}
\crefname{@theorem}{Theorem}{Theorems}
\title{Positivity Certificates for Linear Recurrences}  
\begin{document}

\begin{abstract} 
We consider linear recurrences with polynomial coefficients of Poincaré type and with a unique simple dominant eigenvalue. We give an algorithm that proves or disproves positivity of solutions provided the initial conditions satisfy a precisely defined genericity condition. For positive sequences, the algorithm produces a certificate of positivity that is a data-structure for a proof by induction. This induction works by showing that an explicitly computed cone is contracted by the iteration of the recurrence.
\end{abstract}
\maketitle

\section{Introduction}
A sequence $(u_n)_{n\in \mathbb{N}}$ of real numbers is called \emph{P-finite} if it satisfies a  linear recurrence 
 \begin{equation}\label{rec}
    p_d(n)u_{n+d}=p_{d-1}(n) u_{n+d-1}+\dots+p_0(n)u_n,\qquad n\in \mathbb{N},
 \end{equation}
with coefficients $p_i\in\mathbb R[n]$~(\footnote{Other names for such sequences are \emph{P-recursive}~\cite{Stanley1980} and \emph{holonomic}. The name P-finite was introduced by Zeilberger~\cite{Zeilberger1990}. It is more consistent with the use of `C-finite' for constant coefficients and `D-finite' for linear differential equations. It is also the choice made in recent works by Kauers and Pillwein~\cite{Kauers2006,KauersPillwein2010}.}). When the coefficients~$p_i$ are constants in~$\mathbb R$, the sequence is called \emph{C-finite}.
If $p_d\neq 0$, the \emph{order} of the relation \eqref{rec} is $d$. If $0\not\in p_d(\mathbb N)$, then the sequence is completely determined by the recurrence and initial conditions $(u_0,\dots,u_{d-1})$. We  make this assumption in the rest of this article. (\footnote{\label{footnote2}When it does not hold, the sequence can be defined with extra initial conditions $u_{i+d}$ for $i$ s.t. $p_d(i)=0$. For positivity questions, dealing with~$k:=\max(i\in\mathbb N\mid p_d(i)=0)$ initial values of the sequence separately and considering the recurrence satisfied by~$(u_{n-k})_{n\in\mathbb N}$ reduces to the situation when~$0\not\in p_d(\mathbb N)$.})

Given the polynomials $p_i$ and initial conditions,
the \emph{positivity problem} is to decide whether $u_n\ge0$ for
all~$n\in\mathbb N$~(\footnote{We call this a problem of positivity rather than non-negativity to be consistent with the terminology used in the C-finite case~\cite{OuaknineWorrell2014a}. The related problem to decide whether $u_n>0$ for all~$n\in\mathbb N$ is also of interest; our results carry over to that case.}).
For instance, the rational sequence
\begin{equation}\label{eq:Straub}s_n=\sum_{k=0}^n{(-27)^{n-k}2^{2k-n}\frac{(3k)!}{k!^3}}\binom{k}{n-k}
\end{equation}
is not obviously positive. One way of proving its positivity starts from the recurrence
\[2(n+2)^2s_{n+2}=(81n^2+243n+186)s_{n+1}-81(3n+2)(3n+4)s_n,\quad s_0=1,s_1=12\]
that can be computed by Zeilberger's algorithm~\cite{PetkovsekWilfZeilberger1996}.
A general method due to Gerhold and Kauers~\cite{GerholdKauers2005}, turned into an algorithm for recurrences of order~2 by Kauers and Pillwein~\cite{KauersPillwein2010}, shows the positivity of the solution. 
(Another proof of the positivity of~$s_n$ was given by Straub and Zudilin using hypergeometric identities
\cite{StraubZudilin2015}.) In this work, we give an algorithm proving positivity of a large class of sequences of arbitrary order, including those dealt with by the algorithm of Kauers and Pillwein.

\medskip
P-finite and C-finite sequences are
closed under addition, product and Cauchy product
\(((u_n)_{n\in\mathbb N},(v_n)_{n\in\mathbb N})\mapsto(\sum_{k=0}^nu_kv_{n-k})_{n\in\mathbb N}\). Also, for any
\(\ell\in\mathbb N_{>0}\) and \(q\in\{0,\dots,\ell-1\}\), the
subsequence \((u_{\ell n+q})_{n\in\mathbb N}\) satisfies a linear recurrence (of order
at most \(d\)). These operations are all effective, so that recurrences can be computed for these sequences given recurrences for the input~\cite{Stanley1999}.
These closure properties allow to reduce other problems to that of positivity.

\begin{Example} If \((u_n)_{n\in\mathbb{N}}\) is a C-finite sequence of rational
numbers and \(m\) is the lcm of
the denominators of the initial conditions \(u_0,\dots,u_{d-1}\) and
of the
coefficients \(p_0,\dots,p_{d-1}\), then the sequence defined by 
$v_n=m^nu_n$ is a C-finite sequence of integers and \((w_n)_{n\in
\mathbb{N}}=(v_n^2-1)_{n\in\mathbb{N}}\) is another C-finite
sequence of integers, which is positive if and only if \(u_n\neq0\) for all \(n\). Thus Skolem's problem, which is notoriously difficult, reduces to positivity, thereby showing that positivity is also likely to be hard~\cite{HalavaHarjuHirvensalo2006,OuaknineWorrell2014b}.
\end{Example}

\begin{Example} Deciding whether \(u_n\ge v_n\) for all $n\in\mathbb N$ reduces to the positivity of \((u_n-v_n)_{n\in\mathbb N}\).
Similarly, deciding that \((u_n)_{n\in\mathbb N}\) is increasing (\(u_{n+1}\ge u_n\) for
all \(n
\)), or convex (\(u_{n+1}+u_{n-1}\ge 2u_n\)) or log-convex
(\(u_{n+1}u_{n-1}\ge u_n^2\)) all reduce to the positivity problem, by
constructing recurrences for these new sequences.
\end{Example}

For applications of the positivity problem of C-finite sequences, we refer to the numerous references in the work of Ouaknine and Worrell~\cite{OuaknineWorrell2014a}. 
Motivations for studying positivity in the more general context of
P-finite sequences also come from various areas of mathematics and its
applications, including number theory~\cite{StraubZudilin2015},
combinatorics~\cite{ScottSokal2014}, special function theory~
\cite{Pillwein2008}, or biology~\cite{MelczerMezzarobba2022}. In
computer science, the verification of loops allowing multiplication by
the loop counter leads to P-finite sequences~%
\cite{HumenbergerJaroschekKovacs2017,HumenbergerJaroschekKovacs2018}.
Positivity questions for such recurrences also occur in the
floating-point
error analysis of simple loops obtained by discretization of linear
differential equations~%
\cite{BoldoClementFilliatreMayeroMelquiondWeis2014} and in the
numerical stability of the computation of sums of convergent power
series~\cite{SerraArzelierJoldesLasserreRondepierreSalvy2016}.

\medskip
\paragraph{\bf Previous works}
For C-finite sequences of rational numbers, Ouaknine and Worrell have shown decidability of positivity for recurrences of order up to~5, and a relation between the decidability in higher order and the computability of the homogeneous Diophantine approximation of a specific set of transcendental numbers, a problem related to difficult questions in analytic number theory~\cite{OuaknineWorrell2014a}. We refer to their work for earlier references. When the characteristic polynomial of the sequence does not have multiple roots, this extends to order up to~9. For reversible recurrences of integers (reversible means that unrolling the recurrence backwards produces only integers for negative indices), decidability of positivity is known for order up to~11 and this goes up to~17 if the recurrence is both reversible and with square-free characteristic polynomial~\cite{KenisonNieuwveldOuaknineWorrell2023}. Closer to our work, for recurrences having one dominant eigenvalue, decidability is proven for arbitrary order \cite{OuaknineWorrell2014b}. This is the property we use for P-finite sequences.

For P-finite sequences of order~1, positivity is easy. For order~2, it is reducible to the problem of \emph{minimality}~\cite{KenisonKlurmanLefaucheuxLucaMoreeOuaknineWhitelandWorrell2021}, itself a special case of \emph{genericity of initial conditions} that appears in our work.

Another approach to the positivity of P-finite sequences starts with the work of Gerhold and Kauers~\cite{GerholdKauers2005}, who suggest to check for
increasingly large \(k\) whether
\[u_n\ge0\wedge u_{n+1}\ge0\wedge\dots\wedge u_{n+k}\ge0\Rightarrow u_{n+k+1}\ge0.\]
Using the recurrence, this can be rewritten as a decision problem in
the existential theory of the reals. This can be solved by
cylindrical algebraic decomposition~\cite{collins1975quantifier},
which is what they use; other approaches based on critical points are
also possible~\cite[ch.~13]{BasuPollackRoy2008}.

Gerhold and Kauers obtained several successes with their method,
notably an
automatic proof of Tur\'an's inequality for Legendre polynomials,
\[P_n(x)^2-P_{n-1}(x)P_{n+1}(x)\ge0,\qquad x\in[-1,1],\]
that involves a parameter~\cite{GerholdKauers2006}. But termination is not guaranteed in general and sufficient conditions
for the success of this method are unclear~\cite{KauersPillwein2010}.

Kauers and Pillwein focused on the application of
this method to P-finite sequences~\cite{KauersPillwein2010}. They
added the idea of looking for
a proof by induction of the inequalities \(u_{n+1}\ge\mu u_n\ge 0\)
for a
well-chosen real \(\mu>0\). They showed that this works for order~2
with generic initial conditions. They isolated a class
of recurrences of order~3 for which this approach
also works. Pillwein~\cite{Pillwein2013} explored variants of this method and extended the class of recurrences that can be handled with this type of method. Recently, Pei, Wang, Wang~\cite{PeiWangWang2023} revisited the case of order~2 and gave a simple way to compute
\(\mu\) as above, and \(N\) such that \(u_{n+1}\ge\mu u_n>0\) for
\(n\ge N\).

\medskip
\paragraph{\bf Contributions} Our starting point is a result of
Friedland on the convergence of products of the successive elements of a convergent sequence of matrices~\cite{Friedland2006}. We make explicit the effective aspects of some
of his proofs and apply them to questions of positivity. 
We deal with P-finite sequences of Poincaré type, which means that after dividing by the leading coefficient and taking the limit $n\rightarrow\infty$, each of the coefficients has a finite limit. (We show in \cref{sec:dominant-eigenvals} how to reduce to this case.)
Moreover, we demand that the characteristic polynomial of this new
recurrence has only one root of maximum modulus and that it is a
simple root. Then, we show that, except for a hyperplane of initial
conditions, positivity can be proved by an induction that proves
$d(d-1)$ linear inequalities simultaneously. Note that for order $d=2$, \(d(d-1)=2\) is also the number of inequalities used by Kauers and Pillwein. 

These inequalities have a geometric nature: they describe a convex cone containing the vector~$(u_n,u_{n+1},\dots,u_{n+d-1})$, bounded by $d(d-1)$ hyperplanes and contained in~$\mathbb R_{>0}^d$. The proof by induction consists in proving that successive vectors do not leave that cone. Our algorithm thus produces that cone and an integer~$N$ such that at index~$N$, the vector has entered the cone and no~$u_n$ of smaller index is negative. Capturing the geometry of the iteration by means of over-approximations by cones or related geometric surfaces is natural in this context. For the less general C-finite case and more general questions than positivity, related (but distinct) surfaces have been used  recently~\cite{AlmagorChistikovOuaknineWorrell2022}.

Like Friedland's result, our approach applies to the situation
of a linear recurrence~$U_{n+1}=A(n)U_n$, where~$A(n)$ is a square
matrix over~$\mathbb R(n)$ that is invertible for
all~$n\ge0$ and whose limit as~$n\rightarrow\infty$ is finite. For
positivity, we further require that the limit has a unique eigenvalue
of maximal modulus that is simple, and a corresponding eigenvector
with positive coordinates.

This work is structured as follows. First, background on eigenvalues and asymptotics of linear recurrences is recalled in \cref{sec:background}. \Cref{sec:certif} presents our result, the positivity certificates and how they are verified. The ideas leading to the algorithm are presented in \cref{sec:Friedland}, where we describe the relevant tools from Friedland's work. The algorithm is then given with its proof in \cref{sec:algo}. 

\section{Background}\label{sec:background}
\subsection{Algebraic coefficients}
P-recursivity can be defined over arbitrary fields, but as we are interested in
positivity issues, it is natural to restrict our attention to
subfields of~$\mathbb R$. More precisely, we denote by~$\mathbf Q$ a
field that is either the field~$\mathbb Q$ of rational numbers, or a
real number field~$\mathbb Q(\alpha)$, where~$\alpha$ is given, for
instance, by a square-free polynomial and an isolating 
interval~\cite{BasuPollackRoy2008,Yap2000}. 
In particular, with this
data structure,
it is
possible to determine the sign of an
element
of~$\mathbf Q$, where `sign' means any of $<0$ or $>0$ or $=0$. From
there, using Sturm sequences, one can compute the
number of roots of a polynomial in~$\mathbf Q[x]$ in an interval with
endpoints that are either infinite or in~$\mathbf Q$. A direct consequence used repeatedly in this
work is that one can determine an integer beyond which a polynomial
in~$\mathbf Q[x]$ has fixed sign. (For this problem, one can also use simple Cauchy-type bounds~\cite[Thm.~4.2]{Mignotte1992a}.) In some cases, we also use the fact
that these algorithms extend to~$\mathbf Q(\lambda)$
with~$\lambda\in\mathbb R$
algebraic over~$\mathbf Q$.

\subsection{Dominant eigenvalues}\label{sec:dominant-eigenvals}
If $U_n$ denotes the vector $\trsp{(u_{n},\dots,u_{n+d-1})}$,  the
linear recurrence~\eqref{rec} of order~$d$ is a special case of a first-order linear recurrence 
\begin{equation}\label{eq:rec-vector}
U_{n+1}=A(n)U_n,
\end{equation}
where $A(n)\in\mathbf Q(n)^{d\times d}$. In the situation of \cref{rec}, $A(n)$ is the companion matrix
\[A(n)=\begin{pmatrix}
    0 & 1 & 0 & \dots& 0\\
    0 & 0 & 1 & \dots& 0\\
    \dots&\dots &\dots&\dots&\dots\\
    0 &0  &0 &\dots&1\\
    \frac{p_0(n)}{p_d(n)}&\frac{p_1(n)}{p_d(n)} & \frac{p_2(n)}{p_d(n)}&\dots&\frac{p_{d-1}(n)}{p_d(n)}
    \end{pmatrix}, \qquad n\in\mathbb{N}.\]
The sequence $U_n$ is then recovered from the vector of initial conditions by the matrix factorial $U_n=A(n-1)A(n-2)\dotsb A(0)U_0$.

\begin{Definition}
The linear recurrence~\eqref{eq:rec-vector} (and also~\eqref{rec} as a
special case) is said to be of \emph{Poincaré type} if the matrix
$A:=\lim_{n\rightarrow\infty}A(n)$ is finite (i.e., all the entries
of~$A(n)$ have a finite limit).
\end{Definition}
The motivation for considering this notion is that the finite case corresponds to the situation of a linear recurrence with constant coefficients. Then the P-finite case can be viewed as a 
perturbation of the C-finite case.

For linear recurrences of the type of~\cref{rec}, being of Poincaré type is not a strong restriction for positivity
questions. The general case can be reduced to the Poincaré type~\cite[\S2]{MezzarobbaSalvy2010}. In summary, if the recurrence is not of Poincaré type, then one of
the~$p_i$ has degree higher than that of~$p_d$ and a
solution behaves asymptotically like a rational power~$p/q$ of~$n!$.
The maximal such power can be found by a Newton polygon (this observation goes back to Perron and Kreuser).
Then, one can consider the P-finite
sequence obtained by multiplying $u_n$ by the solution of~$n^pu_
{n+q}=u_n$, with initial conditions~$(1,\dots,1)$. The same operation
can also be used if the matrix~$A$ is nilpotent, using a recurrence of
the form~$u_{n+q}=n^pu_n$ instead, so that we can always assume
that the recurrence is of Poincaré type, with~$A$ having a nonzero
eigenvalue.

\begin{Definition}[Dominant eigenvalues]Let
$\lambda_1,\dots,\lambda_m$ be the distinct complex eigenvalues of the
limit matrix $A$, numbered by decreasing modulus so that 
\[|\lambda_1|=|\lambda_2|=\dots=|\lambda_k|>|\lambda_{k+1}|\geq |\lambda_{k+2}|\dots \geq |\lambda_m|.\]
Then $\lambda_1,\dots,\lambda_k$ are called the \emph{dominant eigenvalues of $A$} (or equivalently a \emph{dominant root} of its characteristic polynomial). We say that an eigenvalue is \emph{simple} when it is a simple root of the characteristic polynomial.
\end{Definition}
Given a characteristic polynomial in~$\mathbf Q[x]$, one can isolate
the dominant eigenvalues in polynomial bit
complexity~\cite{GourdonSalvy1996}, see also~\cite{BugeaudDujellaFangPejkovicSalvy2022}.

\subsection{Asymptotics}
For C-finite sequences, a starting point is the closed form
\[u_n=\sum_{i=1}^kC_i(n)\lambda_i^n+\sum_{i>k}C_i(n)\lambda_i^n,\]
split into one sum over dominant eigenvalues and one sum over smaller ones. Since the basis of solutions is known explicitly, the polynomials~$C_i(n)$ can be computed easily from the initial conditions. They belong to $\mathbf Q(\lambda_1,\dots,\lambda_m)[n]$.

The difficulty when using this formula to prove positivity is that for
$k>1$, the first sum contains oscillating sequences that can come very
close to~0. This is where tools from analytic number theory, such as
Baker's theorem on linear forms in logarithms, come into play for
deciding positivity~\cite{kenison_et_al:LIPIcs.ICALP.2023.130}.

For P-finite sequences the situation is made harder by the fact
that there is no `simple' basis of solutions. Also, the constants that
appear (the analogues of the coefficients of the $C_i$ above), even in
the leading coefficient of the asymptotic behaviour, are difficult to
relate to
the initial conditions. This is illustrated by the following.
\begin{Example}
The number of `fragmented permutations' of size~$n$ is~$c_n/n!$ where $(c_n)_{n\in\mathbb N}$ is defined by
\[(n+2)c_{n+2}=(2n+3)c_{n+1}-nc_n,\quad c_0=c_1=1.\]
It satisfies~\cite[Prop.~VIII.4]{FlajoletSedgewick2009}
\[c_n
\sim\frac{n^{-3/4}e^{2\sqrt n}}{2\sqrt{e\pi}},\quad n\rightarrow\infty.\]
Here, the leading coefficient $1/2\sqrt{e\pi}$ is computed (with the rest of the asymptotic behaviour) by exploiting a closed-form expression of the generating function of~$(c_n/n!)_{n\in\mathbb N}$. In general, this is not available.
\end{Example}
While we know how to compute a basis of formal asymptotic expansions that are solutions of linear recurrences (by the results of Birkhoff-Trjitzinsky improved by Immink~\cite{BirkhoffTrjitzinsky1932,Immink1984}), 
we do not know how to compute the leading coefficient exactly in general. Currently, the closest we have is a certified numerical approximation in the form of an interval that can be made arbitrarily small, but~0 cannot be excluded. This is known as the \emph{connexion problem} for linear differential equations. Still, this is a good basis for an analytic proof of positivity, as was recently shown by Melczer and Mezzarobba on a recurrence of order~7 with polynomial coefficients, themselves of degree~7~\cite{MelczerMezzarobba2022,DongMelczerMezzarobba2022}.

In the case of Poincaré-type recurrences, Poincaré related the asymptotic behaviour to the C-finite case, showing that when all the eigenvalues are  simple and of distinct moduli, any solution $(u_n)_{n\in\mathbb N}$ of the recurrence is either  ultimately~0 (all its terms are~0 from a certain index on) or satisfies $\lim_{n\rightarrow\infty} u_{n+1}/u_n=\lambda_i$ for some~$i$~\cite{Poincare1885}.
Solutions that are ultimately~0 exist if and only if the trailing coefficient~$p_0(n)$ of~\cref{rec} vanishes at a positive integer, or equivalently, when the matrix~$A(n)$ is not invertible for some~$n\in\mathbb N$. For positivity testing, one can proceed as for the leading coefficient: treat the initial terms of the sequence separately up to the largest integer where this happens, and then shift the index.

Perron, Kreuser and later Kooman gave results of a converse type:
 sufficient conditions for a solution to exist with limit~$u_
 {n+1}/u_n=\lambda_i$~\cite{Kooman1991}.
We rely on the following more recent analytic result, which we will use with sequences of invertible matrices with entries in~$\mathbf Q$.
\begin{theorem}[Friedland~\cite{Friedland2006}]\label{thm:Friedland}
Let~$A(n)$ be in~$\operatorname{GL}_d(\mathbb C)$ for~$n\in\mathbb N$ and tend to a finite limit~$A$ as $n\rightarrow\infty$, such that $A$ has exactly one dominant eigenvalue $\lambda$. 
Then there exist two nonzero vectors $v,w$ and a sequence of real numbers~$\theta_n$ such that $Av=\lambda v$
and
\[
\lim_{n\rightarrow\infty}e^{i\theta_n}\frac{A(n)\dotsm A(1) A(0)}{\|A(n)\dotsm A(1) A(0)\|}=v\trsp{w}.\]
A vector of initial conditions $U_0$ is called \emph{generic} when $\trsp{w}U_0\neq0$.
\end{theorem}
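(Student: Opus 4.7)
The plan is to conjugate $A$ to a spectrally adapted block-diagonal form, track orbits of initial conditions via their projective coordinates, and finally assemble column-wise limits into the rank-one matrix $v\trsp{w}$.

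\textbf{Step 1: Normal form.} Since $\lambda$ is a simple dominant eigenvalue, decompose $\mathbb{C}^{d}=\mathbb{C}v\oplus F$ where $F$ is the $A$-invariant complementary subspace on which the spectral radius equals $\rho<|\lambda|$. In a basis adapted to this splitting, $A=\mathrm{diag}(\lambda,A_{2})$; by Gelfand's formula I can further choose a norm in which $\|A_{2}\|\le\rho'$ for some fixed $\rho'\in(\rho,|\lambda|)$. Writing $A(n)=A+E(n)$ with $\|E(n)\|\to 0$, I split every vector as $u=(\alpha,y)\in\mathbb{C}\oplus F$ and $E(n)$ into four blocks $\varepsilon_{11},\varepsilon_{12},\varepsilon_{21},\varepsilon_{22}$, all tending to $0$.

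\textbf{Step 2: Projective contraction.} For $u_{0}=(\alpha_{0},y_{0})$ with $\alpha_{0}\neq 0$, set $B_{n}u_{0}=(\alpha_{n},y_{n})$ and, as long as $\alpha_{n}\neq 0$, define $r_{n}:=y_{n}/\alpha_{n}\in F$. The recurrence $U_{n+1}=A(n)U_{n}$ rewrites as
\[
r_{n+1}=\frac{(A_{2}+\varepsilon_{22}(n))\,r_{n}+\varepsilon_{21}(n)}{(\lambda+\varepsilon_{11}(n))+\varepsilon_{12}(n)\,r_{n}}.
\]
I would show that for $n$ large enough this map strictly contracts a fixed neighbourhood of $0$ in $F$ with factor tending to $\rho'/|\lambda|<1$; hence $r_{n}\to 0$ and $\alpha_{n}\neq 0$ eventually. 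Translated back, this says $B_{n}u_{0}/\alpha_{n}\to v$ as soon as $u_{0}$ has nonzero first coordinate in the adapted basis.

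\textbf{Step 3: From columns to matrix.} Apply Step 2 to $u_{0}=v$ to obtain a reference scalar $\alpha_{n}^{\star}\neq 0$ with $|\alpha_{n}^{\star}|^{1/n}\to|\lambda|$. For the remaining standard basis vectors $e_{j}$ whose first coordinate vanishes, the coupled majorant
\[
|\alpha_{n+1}(e_{j})|\le(|\lambda|+o(1))|\alpha_{n}(e_{j})|+o(1)\|y_{n}(e_{j})\|,\qquad \|y_{n+1}(e_{j})\|\le\rho''\|y_{n}(e_{j})\|+o(1)|\alpha_{n}(e_{j})|,
\]
with $\rho''\in(\rho',|\lambda|)$ for $n$ large, forces $\|y_{n}(e_{j})\|/|\alpha_{n}^{\star}|\to 0$ and, once the contraction of Step 2 is invoked for each orbit $(\alpha_{n}(e_{j}),y_{n}(e_{j}))$, gives convergence of each ratio $\alpha_{n}(e_{j})/\alpha_{n}^{\star}$ to some scalar $\xi_{j}$. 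Packaging $\trsp{w}:=(1,\xi_{2},\dots,\xi_{d})$ yields $B_{n}/\alpha_{n}^{\star}\to v\trsp{w}$; dividing by $\|B_{n}\|$ instead only rescales the limit by a positive constant, which can be absorbed into the pair $(v,w)$ to match the normalisation in the statement.

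\textbf{Main obstacle.} The real analytic content lies in the \emph{existence} of the scalar limits $\xi_{j}$ in Step 3, not their boundedness: without any summability hypothesis on $E(n)$, a straight Grönwall estimate yields only $\alpha_{n}(e_{j})/\alpha_{n}^{\star}=O(1)$, and the product $\prod_{k}(1+\varepsilon_{11}(k)/\lambda)$ need not converge. The convergence has to be deduced from the projective contraction of Step 2: once both orbits have entered a small neighbourhood of $[v]$, the two recurrences for $\alpha_{n}^{\star}$ and $\alpha_{n}(e_{j})$ differ by a perturbation that decays geometrically with rate $\rho''/|\lambda|<1$, so their ratio is governed by a convergent infinite product. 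Making this last step rigorous, together with handling the non-generic initial data $e_{j}$ ($j\ge 2$) for which Step 2 does not apply directly, is the technical heart of Friedland's argument.
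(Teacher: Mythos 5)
Your proposal cannot really be compared with a proof in the paper, because the paper does not prove this statement: it is quoted from Friedland's article, and the paper only makes effective the two ingredients it needs downstream, namely the Hilbert-metric contraction of a cone (\cref{thm:key}, via Birkhoff's theorem) and the reduction to positive matrices (\cref{lemma:Friedland}). Measured against Friedland's actual argument, your sketch has a genuine gap, and part of it you have already diagnosed yourself; but there is also an error you did not flag. In Step 2 you claim alignment of $B_nu_0$ with $v$ ``as soon as $u_0$ has nonzero first coordinate in the adapted basis''. That identifies genericity with the spectral projection of the \emph{limit} matrix $A$, which is false in the variable-coefficient setting: the exceptional set is the hyperplane $\trsp{w}U_0=0$, and $w$ is in general not the left eigenvector of $A$ (in the Apéry example of the paper, $w\propto(1,6/\zeta(3)-5)$, while the left eigenvector of $A$ is algebraic). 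Concretely, a minimal solution of an order-two recurrence can have nonzero first coordinate in your splitting and yet its direction converges to the subdominant eigendirection, so $r_n\not\to0$; your local contraction of a small neighbourhood of $0\in F$ says nothing about orbits that never enter that neighbourhood, and you give no argument that orbits with $\alpha_0\neq0$ do enter it.

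The obstacle you name at the end is indeed the heart of the matter, and it is not a detail that a Grönwall estimate or a formally convergent infinite product can be expected to fill: with only $E(n)\to0$ and no summability, neither the existence of the limits $\xi_j$ nor the negligibility $\|B_ne_j\|=o(\|B_n\|)$ of the columns with $\trsp{w}e_j=0$ follows from your majorants, and the geometric decay you invoke for the ratio recursion is only available \emph{after} one knows both orbits are trapped in a region of uniform projective contraction. Friedland's mechanism for obtaining that trapping and that uniformity is precisely what the present paper extracts: conjugate so that a power of the matrix is positive (\cref{lemma:Friedland}) and use Birkhoff's bound on the contraction coefficient of a positive matrix for Hilbert's pseudo-metric, which makes a whole cone $B_r(v)$ forward-invariant and uniformly contracted for all large $n$ (\cref{thm:key}), not just a small neighbourhood of the fixed point of a Möbius-type recursion. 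Without that (or an equivalent uniform-contraction device), Steps 2 and 3 of your outline remain unproved at exactly the points where the theorem is nontrivial.
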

Thus, for a generic vector of initial conditions, the
sequence $U_n$, which equals $A(n)\dotsm A(0)U_0$, has a direction that tends to that of~$v$, in a sense
made more precise in \cref{sec:Friedland}.

In the constant case, when $A(n)=A$ for all~$n$, this theorem gives a
proof of the convergence of the classical power method~%
\cite{ParlettPoole1973}. In that situation, the vector $w$ is a left
eigenvector of~$A$ for~$\lambda$, i.e., $\trsp{w}A=\lambda \trsp{w}$.
In particular, if the entries of~$A$ belong to the
field of rational numbers~$\mathbb Q$, then the entries of~$w$
belong to~$\mathbb Q(\lambda)$. In the case of polynomial
coefficients, this vector~$w$ is
much more elusive.
\begin{Example}
The recurrence used by Apéry in his proof of the irrationality of~$\zeta(3)$ \cite{Van-der-Poorten1979}
 is
\[(n + 2)^3u_{n + 2} =(2n + 3)(17n^2 + 51n + 39)u_{n + 1} -(n+1)^3u_n.\]
The corresponding limit matrix has eigenvalues $\lambda_\pm=
(3\pm2\sqrt2)^2$, and corresponding
eigenvectors $\trsp{(1,\lambda_\pm)}$. Up to a nonzero scalar, the vector $\trsp{w}$ is $(1,6/\zeta(3)-5).$ Since~$\zeta(3)$ is irrational, any nonzero vector of initial conditions in~$\mathbb Q$ is generic.
\end{Example}
In order 2, the non-generic situation is called \emph{minimal} as it corresponds to a vector space of dimension~1 of solutions that do not have the dominant order of growth. For recurrences of order~2, deciding positivity reduces to deciding minimality~\cite{KenisonKlurmanLefaucheuxLucaMoreeOuaknineWhitelandWorrell2021}. \Cref{theorem:main} below generalizes this situation to arbitrary order.

\section{Positivity certificates}\label{sec:certif}

\begin{Definition}
We say that a real vector or matrix $V$ is \emph{positive} 
(resp. non-negative), and write $V>0$ (resp. $V\ge0$), when all its
entries are positive (resp. non-negative).
\end{Definition}

For generic initial conditions, it is a consequence of Pringsheim's theorem~\cite{Titchmarsh1939} (see also~\cite{Vivanti1893a,Hadamard1954}) that if $(u_n)_{n\in\mathbb N}$ is a positive solution of~\cref{rec}, and $\lambda$ is a dominant eigenvalue of the limit matrix~$A$, then $|\lambda|$ itself is an eigenvalue of it. Thus if there is a unique dominant eigenvalue~$\lambda$, it is real and positive. Moreover, if $\lambda$ is an eigenvalue of a companion matrix, then~$
(1,\lambda,\dots,\lambda^{d-1})$ is a corresponding \emph{positive} eigenvector. For more general recurrences \cref{eq:rec-vector} with $A(n)$ an arbitrary matrix of rational functions, we add the existence of a positive eigenvector as a hypothesis in our approach.
Our main result is the following.
\begin{theorem}\label{theorem:main}
Let~$A(n)$ be in $\mathbf Q(n)^{d\times d}$, invertible for $n\in\mathbb N$, and tending to a finite limit~$A$ as~$n\rightarrow\infty$, that has a unique simple dominant eigenvalue and a corresponding positive eigenvector. Then there exists a vector $W\in\mathbb R^d$ such that  positivity of the solution of~$U_{n+1}=A(n)U_n$ given~$A(n)$ and~$U_0\in\mathbf Q^d$ can be decided when $\trsp{W}U_0\neq0$.

Algorithm \PP\ in \cref{sec:algo} either disproves positivity or computes a positivity certificate, in the generic situation $\trsp{W}U_0\neq0$.
\end{theorem}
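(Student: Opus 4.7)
The plan is to leverage Friedland's theorem (\cref{thm:Friedland}) to turn the asymptotic direction of $U_n$ into an effective positivity proof. By hypothesis, the limit~$A$ has a unique simple dominant eigenvalue~$\lambda$ with a \emph{positive} eigenvector~$v$. Friedland supplies a covector~$w$ such that
\[
\frac{U_n}{\|A(n-1)\cdots A(0)\|}\;\xrightarrow[n\to\infty]{}\;(\trsp{w}\,U_0)\,v,
\]
so as soon as $\trsp{w}\,U_0\neq 0$ the direction of $U_n$ converges to that of $v$ up to a sign. Since $v>0$, this gives: if $\trsp{w}\,U_0<0$ every coordinate of $U_n$ is eventually negative and positivity fails; if $\trsp{w}\,U_0>0$ every coordinate is eventually positive and only finitely many initial $U_n$ remain to inspect. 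In both cases, genericity (whether $\trsp{w}\,U_0=0$) is the only obstruction, which proves the reduction part of the theorem.

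To build the proof-by-induction certificate, I would construct a convex polyhedral cone $C\subset\mathbb{R}_{>0}^d$, cut out by $d(d-1)$ linear inequalities, containing the ray $\mathbb{R}_{>0}v$ and such that $A(n)\,C\subseteq C$ for all~$n$ beyond an effectively computable threshold $N_0$. The geometry of $C$ is dictated by the spectral gap between $\lambda$ and the subdominant eigenvalues of~$A$: the matrix $A$ strictly contracts a suitable cone around~$v$, and Poincaré type then transports this contraction to $A(n)$ for $n\ge N_0$ by continuity. The algorithm \PP{} iterates the recurrence from $U_0$ until either some coordinate of $U_n$ is negative (disproving positivity) or some $U_N\in C$ with $N\ge N_0$ is reached, in which case it returns the pair $(C,N)$ as the certificate. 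Verification then reduces to three finite checks: non-negativity of the prefix $U_0,\dots,U_{N-1}$, the membership $U_N\in C$, and the stability $A(n)\,C\subseteq C$ for $n\ge N$, each expressible as a sign condition on polynomials in~$n$ that is decidable by Sturm sequences over~$\mathbf{Q}$ (possibly extended by~$\lambda$).

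The main obstacle is the effective construction of $C$ and of the threshold $N_0$. Friedland's contraction argument is analytic; making it constructive requires quantitative control of both the convergence rate $A(n)\to A$ and the spectral gap, so that $C$ can be chosen wide enough to absorb the perturbation $A(n)-A$ yet narrow enough to be strictly contracted by~$A$. A secondary difficulty is that $w$ is generally transcendental and only numerically accessible, so the sign of $\trsp{w}\,U_0$ is not read off directly but is resolved implicitly by iterating the recurrence until $U_n$ either exhibits a negative entry or lands in $C$; termination of this loop is guaranteed precisely under genericity, which is why \PP{} succeeds exactly in that case and only then produces a finite certificate.
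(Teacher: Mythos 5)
Your first paragraph (reduction to genericity via Friedland's theorem) and your description of the certificate's shape match the paper's strategy, but the core of the proof is exactly the point you defer: the effective construction of the invariant cone and of the threshold, which you name as ``the main obstacle'' without resolving it. The paper closes this gap with three specific ingredients that are absent from your proposal. First, \cref{lemma:Friedland} (constructive): one computes a rational change of basis $T\in\operatorname{GL}_d(\mathbb Q)$ and a power $m$ such that $TA^mT^{-1}>0$ with positive right \emph{and} left eigenvectors; this reduces everything to the positive-matrix setting. Second, the cone is not built from the spectral gap at all: it is the ball $B_r(v)$ of Hilbert's pseudo-metric (\cref{eq:Brv}), pulled back by $T^{-1}$, and its contraction is given by Birkhoff's theorem, which supplies an explicit contraction coefficient $L(M)<1$ for any positive matrix $M$, depending continuously on the entries. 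The invariance $TA(n+m-1)\dotsm A(n)T^{-1}(B_r(v))\subset B_r(v)$ for large $n$ then follows from the triangle-inequality estimate of \cref{thm:key}, $d_H(Mx,v)\le L(M)\log r+d_H(Mv,Av)$, applied to $M=TA(n+m-1)\dotsm A(n)T^{-1}$ --- no spectral-gap computation and no cone ``wide enough to absorb the perturbation'' is ever needed.

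Third, and this answers your worry about needing ``quantitative control of the convergence rate $A(n)\to A$'': because the entries of $A(n)$ are rational functions of $n$, the invariance of the cone and hence the threshold $K$ are obtained purely symbolically. One applies $TA(n+m-1)\dotsm A(n)T^{-1}$ to each of the $2^d-2$ generators of $B_r(v)$ and checks that the resulting vectors of polynomials in $\mathbf Q(\lambda)[n]$ satisfy the $d(d-1)$ defining inequalities of $B_r(v)$ for all $n\ge K$; ultimate positivity of these explicit polynomials is guaranteed by \cref{thm:key} and is decided by Sturm sequences (doubling $K$ until they certify positivity on $[K,\infty)$). Without these steps your plan is a correct outline but not a proof: you assert that ``$A$ strictly contracts a suitable cone around $v$,'' yet for a general (non-positive) matrix $A$ with a positive dominant eigenvector this is precisely what must be established, and the paper does it only after conjugating $A^m$ to a positive matrix. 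Two smaller points: the case $\lambda<0$ (possible under the stated hypotheses, and handled by step~1 of \PP{} via a Pringsheim-type argument) is not addressed in your argument, since the direction of $U_n$ does not converge but alternates; and your claim that the loop terminates ``precisely under genericity'' should read that it terminates in all generic cases and also in non-generic non-positive ones --- non-termination occurs exactly for non-generic positive sequences.
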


By the discussion above, we obtain the following consequence for P-finite sequences.
\begin{corollary}\label{cor:recurrences}
Given the polynomials~$p_i\in\mathbf Q[n]$ in the linear recurrence \cref{rec} 
and initial conditions $U_0=(u_0,\dots,u_{d-1})\in\mathbf Q^d$, if $0\not\in p_0p_d(\mathbb N)$, $\deg p_d=\deg p_0\ge \deg p_i$ for $i\in\{1,\dots,d-1\}$ and the characteristic polynomial
\[\chi(X)=X^d-\sum_{i=0}^{d-1}{\lim_{n\rightarrow\infty}\frac{p_i(n)}{p_d(n)}X^i}\]
has a unique dominant root, then there exists a nonzero~$W\in\mathbb R^d$ such that the positivity of the sequence $(u_n)$ can be decided if $\trsp{W}U_0\neq0$.
\end{corollary}
The cases when $0\in p_0p_d(\mathbb N)$ can be handled as in \cref{footnote2}.

If both the initial condition is not generic and the sequence is positive, then, and only then, our algorithm does not terminate. Constructing examples of minimal-order recurrences with coefficients in~$\mathbb Q[n]$ and initial conditions in~$\mathbb Q$ where this occurs does not seem to be easy.

\subsection{Certificates and their verification} 
In \cref{theorem:main}, a positivity certificate is a data-structure 
for a proof by induction: it consists of a quadruple $(T,r,N,m)$
formed of an invertible matrix $T\in\operatorname{{GL}}_d(\mathbb Q)$,
a rational number $r>1\in\mathbb Q$ (or $r=\infty$), a
non-negative integer $N\in\mathbb N$, and a positive integer
$m\in\mathbb N_{>0}$\footnote{This last integer~$m$ is there for technical reasons and does not have a geometric meaning; we suggest focusing on the case $m=1$ in a first reading.}.

Verification is reduced to checking positivity of a certain number of
polynomials in $\mathbf Q(\lambda)[n]$ for $n\ge N$, where $\lambda$
is the dominant eigenvalue of~$A$. Let~$e$ be a positive eigenvector
of~$A$ for~$\lambda$, assume that $v=Te$ is positive and consider two
convex cones pointed at~0. The first one is
\begin{equation}\label{eq:Brv} 
B_r(v)=\{x\in\mathbb R^d_{>0}\mid x_iv_j\le rx_jv_i\text{ for all }i,j\}.
\end{equation}
If $r=\infty$, this cone is~$\mathbb R^d_{>0}$. Otherwise, it is
generated by~$2^d-2$ vectors obtained by
choosing the $i$th coordinate in $\{v_i,rv_i\}$ so that the result is
neither~$v$ nor~$rv$. The second cone is its image
$$
C_r(v)=T^{-1}B_r(v).
$$
Verification proceeds in three steps. We first present it when~$m=1$:
\begin{itemize}
\item[]\emph{Sanity checks:} check $\lambda>0$, $v>0$, $C_r(v)\subset\mathbb R_{>0}^d.$
\item[]\emph{Initialization:} check that $U_n\ge0$ for $n<N$; check
that $U_N\in C_r(v)$.
\item[]\emph{Induction step:} check that $A(n)C_r(v)\subset C_r(v)$ for $n\ge N$.
\end{itemize}
When these steps are completed, it follows that for all $n\ge N$, one has $U_n\in C_r(v)\subset\mathbb R_{>0}^d$: positivity is proved. This induction effectively proves $d(d-1)$ linear inequalities on $(u_n)_{n\in\mathbb N}$ simultaneously, originating in the inequalities that define $B_r(v)$ (and only $d$ inequalities when $r=\infty$).

If $m>1$, the initialization also checks that $U_{N+1},\dots, U_{N+m-1}$ belong to~$C_r(v)$ and the induction step checks that 
$A(n+m-1)\dotsm A(n)C_r(v)\subset C_r(v)$
instead of $A(n)C_r(v)\subset C_r(v)$. The same argument shows that this proves positivity by induction.

\subsection{Complexity questions}
In terms of algorithmic complexity, there are two expensive steps: one related to the recurrence and another one related to the initial conditions. The induction step can be performed by checking that each of the $2^d-2$ vectors generating the cone~$C_r(v)$ (resp. $d$ vectors when $r=\infty$) has for image by~$A(n)$ a vector of rational functions that satisfies the $d(d-1)$  inequalities (resp. $d$ inequalities) defining the cone. This amounts to $d(d-1)(2^d-2)$ (resp. $d^2$) polynomials in~$\mathbb Q[n]$ that have to be proved positive for~$n\ge N$ (e.g., by Sturm sequences, or simply by certified numerical evaluation of the roots). The complexity of that step is thus singly exponential in the order of the recurrence. 

Concerning the initial conditions, checking~$U_n\ge0$ for $n\le N$ has complexity that is clearly polynomial in~$N$ (one can also use multipoint polynomial evaluation to reduce further the cost by evaluating the coefficients~$p_i(n)$ for $n\le N$ efficiently); this has complexity singly exponential in the \emph{bit size} of~$N$~\cite[Prop.~15.1]{BostanChyzakGiustiLebretonLecerfSalvySchost2017}.
Still, at the moment we do not have an upper bound on~$N$ in terms of the input, in particular in relation to a distance of the vector of initial conditions to the hyperplane of non-genericity.

\subsection{Two examples of certificate verification}
\label{subsec:example}
\begin{figure}
    \centering
     {\includegraphics[height=5cm, trim=100 100 100 100,clip=true]{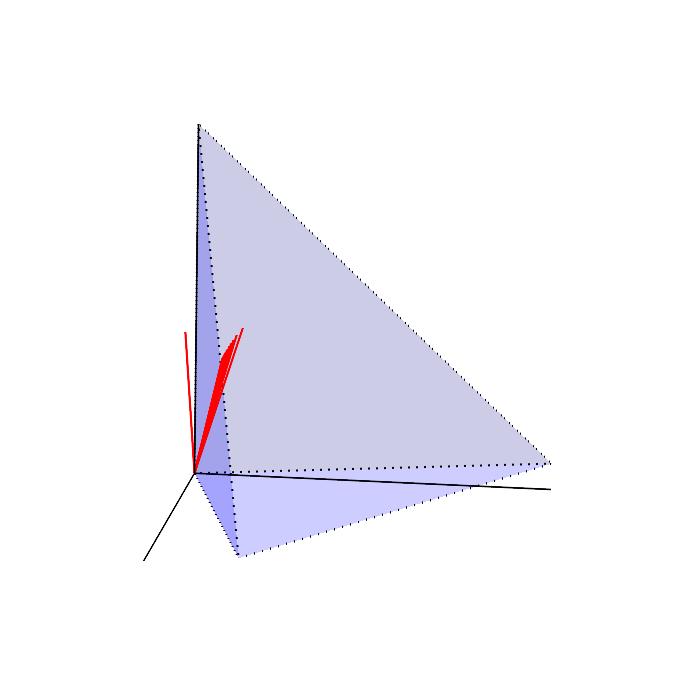}}\qquad\qquad{\includegraphics[height=5cm]{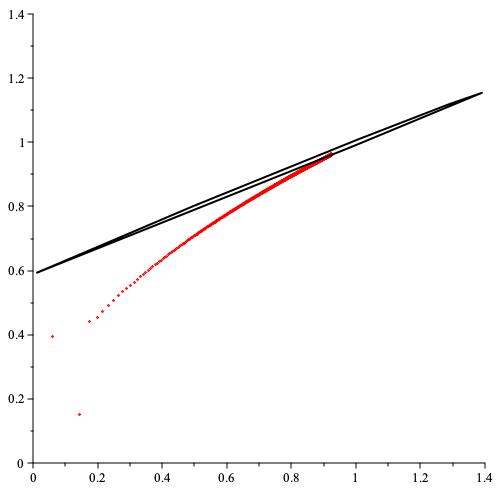}}
    \caption{The first values of $U_n$ in the examples of \cref{subsec:example} (in red), together with the corresponding cones $T^{-1}(B_r(v))$ (in black or blue). In both cases, the dimension is decreased by scaling the vectors $U_n$ so that their last coordinate is~1 and by taking the intersection of the cone with the hyperplane setting the last coordinate to~1.
     \label{fig:my_label}
}
\end{figure}

\begin{Example}We start with an example where~$r=+\infty$, where verification is easier.

The sequence defined by
\[u_{n}:=\sum_{k=0}^n (-1)^k \frac{(4n-3k)!(4!)^k}{(n-k)!^4k!}\]
is the first of a family related to a former conjecture of Gillis,
Reznick and Zeilberger~\cite{gillis1983elementary,Yu2019}. Its positivity was
proved automatically by Kauers~\cite{Kauers2007a} (see also~\cite{Pillwein2019}),
using the linear recurrence of order~4 that can be computed by Zeilberger's algorithm:
\begin{multline*}
(2n + 5)(4n + 11)(4n + 7)(n + 4)^3u_{n+4}
- 8(4n + 7)(4n + 13)(n + 3)(40n^3 + 380n^2 + 1193n + 1240)u_{n+3}\\
+576(192n^6 + 3072n^5 + 20108n^4 + 68918n^3 + 130513n^2 + 129613n + 52815)
u_{n+2}\\
+ 13824(4n + 15)(32n^5 + 344n^4 + 1424n^3 + 2855n^2 + 2801n + 1085)u_{n+1}\\
+ 331776(4n + 15)(4n + 11)(2n + 7)(n + 1)^3u_n=0.
\end{multline*} 

We give an alternate proof using a relatively
small certificate of positivity:
\[T=\begin{pmatrix}1& 0& 0 & 0\\
-1& 1& 0&0\\
0& -2& 1& 0\\
-3000&-1000&-40&1
\end{pmatrix},\quad r=+\infty,\quad N=3,\quad m=1.\]
This is illustrated in \cref{fig:my_label}.
The verification of this certificate thus consists in a proof by induction that the following inequalities are all satisfied for $n\ge3$:
\[
    u_n>0,\quad
    u_{n+1}> u_n,\quad
    u_{n+2}> 2u_{n+1},\quad
    u_{n+3}> 40 u_{n+1}+1000u_{n+2}+3000 u_{n+3}.
\]
We now turn to the verification.

The characteristic polynomial has one dominant
root~$\lambda\approx130$, of much larger modulus than the other ones.
The corresponding eigenvector $v=\trsp{
(1,\lambda,\lambda^2,\lambda^3)}$ is also positive. As $T^{-1}$ is a triangular matrix with positive elements below the diagonal, we get that $C_r(v)=T^{-1}\mathbb{R}_{>0}^d>0$, which concludes the `Sanity checks'.

Checking that the first~4 vectors~$U_0,U_1,U_2,U_3$ are positive is done by checking $u_i>0$ for $i=0,\dots,6$. With $U_3=\trsp{(18816, 1785816, 177396480, 18271143360)}$, it is easy to check that $TU_3>0$, i.e., $TU_3\in B_r(v)$ or equivalently $U_3\in C_r(v)$, concluding the initialization step.

Finally, as the cone $B_r(v)$ is $\mathbb R_{>0}^4$, the induction
step, which consists in checking that  $TA_nT^{-1}\mathbb R_{>0}^4\subset \mathbb R_{>0}^4$ for $n\ge3$ is readily achieved by a direct computation of $TA(n)T^{-1}$, which has the form
\[TA(n)T^{-1}=\begin{pmatrix}
1 & 1& 0&0\\
1 &1&1&0\\
4076&1076&38&1\\
a_1(n)&a_2(n)&a_3(n)&a_4(n)
\end{pmatrix},\]
with $a_i(n)$ rational functions. For instance, $a_1(n)$ is
\[
\frac{
8(362464n^6 + 12010912n^5 + 121406462n^4 + 567578151n^3 + 1363921108n^2 + 1636882352n + 779476880)}{(32n^6 + 608n^5 + 4738n^4 + 19353n^3 + 43628n^2 + 51376n + 24640)}
\]
making its positivity apparent. The same is true for the other ones. Therefore, the image of any vector with positive coordinates also has positive coordinates.

\end{Example}
\begin{Example}\label{example-6}
As an example with a finite $r$, we consider the sequence $(u_n)_{n\in\mathbb{N}}$ defined by the third-order
recurrence
$$(n + 1)u_{n + 3} = \left(\frac{77}{30}n + 2\right)u_{n + 2} - \left(\frac{13}{6} n - 3\right)u_{n + 1} +\left(\frac{3}{5}n + 2\right)u_{n},\qquad n\in\mathbb{N}, $$
with initial conditions $u_0=1$, $u_1=15/14$, $u_2=8/7$. This is a recurrence that falls outside of the domain reachable by the methods of Kauers and Pillwein~\cite{KauersPillwein2010,Pillwein2019}.
Here is a certificate for its positivity:
$$T=\begin{pmatrix}-36/7& 76/7& -33/7\\
162/7& -405/7& 250/7\\
303/14 & -4783/84& 3049/84
\end{pmatrix},\quad r=5/3,\quad N=3040,\quad m=1.$$

The dominant eigenvalue is~$\lambda=1$ and the vector~$v$ is~$\trsp{(1,1,1)}$. So the first part of the `Sanity checks' is easy.
The $2^d-2=6$ edge vectors of the cone $B_r(v)$ are
\[\begin{pmatrix}r\\
1\\
1\end{pmatrix},\quad\begin{pmatrix}1\\
r\\
1\end{pmatrix},\quad\begin{pmatrix}1\\
1\\
r\end{pmatrix},\quad\begin{pmatrix}r\\
r\\
1\end{pmatrix},\quad\begin{pmatrix}1\\
r\\
r\end{pmatrix},\quad\begin{pmatrix}r\\
1\\
r \end{pmatrix},\quad\text{with $r=5/3$}.\]
In order to check that $C_r(v)\subset\mathbb{R}^d_{>0}$, it is sufficient to test that $T^{-1}V>0$
for each of these vectors, concluding the `Sanity checks'.

For the initialization, one checks the positivity of the first $N$ terms of the sequence and that $TU_N$ satisfies the $d(d-1)=6$ inequalities that define~$B_r(v)$.

Finally, for the induction step, for each generator~$V$ of the cone~$B_r(v)$,
one checks that the vector of polynomials\footnote{More generally, one would check the polynomials $p_d(n)TA(n)T^{-1}V$ for $n$ such that $p_d(n)>0$.} $(n+1)TA(n)T^{-1}V$
satisfies the linear inequalities that define~$B_r(v)$.  For instance, for the generator $V_1=\begin{pmatrix}r&1&1
\end{pmatrix}\trsp{}$
one gets the polynomials
\begin{multline*}
360612n + 392450160, 1939140n - 264007440, 1247967n + 399271660,\\
1406727n - 268100340, 1839915n + 153100060, 420147n + 142185660,
\end{multline*}
that have to be proved positive for $n\ge N$. As they are all linear in~$n$ in this example, this is straightforward.

In terms of the sequence $(u_n)_{n\in\mathbb{N}}$, this proof shows by induction that for $n\ge n_0$, the following six inequalities are satisfied
\begin{align*}
0 &< -666u_n + 1595u_{n + 1} - 915u_{n + 2},&\quad 0 &< 918u_{n} - 2253u_{n + 1} + 1349u_{n + 2},\\ 
0 &< -2538u_{n} + 6303u_{n + 1} - 3709u_{n + 2},&\quad 0& < 3258u_{n} - 9335u_{n + 1} + 6245u_{n + 2},\\ 
0 &< 1422u_{n} - 3317u_{n + 1} + 1951u_{n + 2},&\quad 0 &< 10386u_{n} - 26651u_{n + 1} + 16433u_{n + 2}.
\end{align*}
\end{Example}

\section{Convergent contractions}\label{sec:Friedland}
The geometric insight on the convergence in Friedland's theorem makes use of Hilbert's pseudo-metric.
\begin{Definition} Hilbert's pseudo-metric on \(\mathbb R_{>0}^d\) is defined
by
\[d_H(x,y)=\log\frac{\max_i(x_i/y_i)}{\min_i(x_i/y_i)}.\]
\end{Definition}
Being a pseudo-metric means that \(d_H(x,x)=0\), \(d_H(x,y)=d_H(y,x)\)
and \(d_H(x,y)\le d_H(x,z)+d_H(y,z).\) All are easy to check. Moreover,
\(d_H(x,y)=0\) if and only if there exists \(\alpha>0\) such that
\(x=\alpha y\).

For this pseudo-metric, the closed ball centered at \(v\) and of
radius \(\log r\) is the cone \(B_r(v)\) from \cref{eq:Brv}.

\begin{theorem}[Birkhoff \cite{Birkhoff1957}] For a positive matrix
\(A\in\mathbb R_{>0}^{d\times d}\), let
\(L(A)=\sup_{x\neq \alpha y}d_H(Ax,Ay)/d_H(x,y)\). Then
\[L(A)=\frac{1-\sqrt{\psi(A)}}{1+\sqrt{\psi(A)}}\quad\text{with}\quad \psi(A)=\min_{i,j,k,\ell}\frac{a_{ik}a_{j\ell}}{a_{i\ell}a_{jk}},\]
showing that \(L\) is continuous and that \(A\) is a contraction.
\end{theorem}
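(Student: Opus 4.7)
The plan is to derive the closed-form for $L(A)$ by going through the projective diameter $\Delta(A):=\sup_{x,y>0}d_H(Ax,Ay)$ of the image cone $A\mathbb{R}_{>0}^d$, using Birkhoff's classical two-step route: first prove $\Delta(A)=-\log\psi(A)$, then prove the contraction inequality $L(A)=\tanh(\Delta(A)/4)$. The stated formula is just the algebraic identity $\tanh(\tfrac14\log(1/\psi))=(1-\sqrt{\psi})/(1+\sqrt{\psi})$, obtained by writing $\tanh$ in exponential form and clearing a factor of $\psi^{1/4}$. Once the formula is in hand, continuity of $L$ in $A$ is automatic because $\psi$ is a rational function of the entries of $A$ and $\tanh$ is continuous, while strict contraction $L(A)<1$ is equivalent to $\psi(A)>0$, which holds precisely because $A$ has only positive entries.

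For the identity $\Delta(A)=-\log\psi(A)$, I would first evaluate $d_H$ on pairs of standard basis vectors: since $(Ae_k)_i=a_{ik}$, one immediately gets
$$d_H(Ae_k,Ae_\ell)=\log\max_{i,j}\frac{a_{ik}a_{j\ell}}{a_{i\ell}a_{jk}},$$
and taking the further supremum over $k,\ell$ produces $-\log\psi(A)$ by the $(k,\ell)$-swap symmetry that built $\psi$ from a min. It remains to see that $\Delta(A)$ is attained on such extremal pairs; this follows from scale-invariance of $d_H$ together with quasiconvexity of $(x,y)\mapsto d_H(Ax,Ay)$ in each argument, since each coordinate $(Ax)_i=\sum_k a_{ik}x_k$ is linear in $x$, so the sup of a ratio of linear forms over a simplex is attained at a vertex.

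The hard step, and where I expect the real work to be, is the tight contraction estimate $d_H(Ax,Ay)\le \tanh(\Delta(A)/4)\,d_H(x,y)$. For this I would use the standard reduction to a two-variable extremal problem: set $t_k=x_k/y_k$ and let $m=\min t_k$, $M=\max t_k$, so $d_H(x,y)=\log(M/m)$. Then for every pair $i,j$,
$$\frac{(Ax)_i(Ay)_j}{(Ax)_j(Ay)_i}=\frac{\sum_k\alpha^{(ij)}_k t_k}{\sum_k\beta^{(ij)}_k t_k}\cdot\frac{\sum_k\beta^{(ij)}_k}{\sum_k\alpha^{(ij)}_k},$$
with weights $\alpha^{(ij)}_k\propto a_{ik}y_k$ and $\beta^{(ij)}_k\propto a_{jk}y_k$, whose pairwise cross-ratios $\alpha^{(ij)}_k\beta^{(ij)}_\ell/(\alpha^{(ij)}_\ell\beta^{(ij)}_k)=a_{ik}a_{j\ell}/(a_{i\ell}a_{jk})$ lie in $[\psi(A),1/\psi(A)]$. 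Since such a ratio of convex combinations is quasiconvex in each $t_k\in[m,M]$, the maximum is reached when each $t_k$ takes only the extreme value $m$ or $M$, reducing everything to a purely two-term calculation. The sharp two-term optimum is the classical computation that yields, after a double-angle identity for $\tanh$, the bound $\log(M/m)\cdot\tanh(\Delta(A)/4)$.

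Finally, to match the upper bound I would exhibit near-extremal pairs realizing $L(A)$ asymptotically: picking indices $(i^\ast,j^\ast,k^\ast,\ell^\ast)$ that achieve $\psi(A)$, take $x,y$ essentially supported on coordinates $k^\ast,\ell^\ast$ with ratios tuned so that $M/m\to\infty$, and check that the chain of inequalities above becomes tight in the limit. Combined with the upper bound, this gives $L(A)=\tanh(\Delta(A)/4)=(1-\sqrt{\psi(A)})/(1+\sqrt{\psi(A)})$, and continuity of $L$ together with $L(A)<1$ follow directly from the formula.
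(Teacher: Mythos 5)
A preliminary remark: the paper itself does not prove this statement at all --- it is quoted from Birkhoff's 1957 paper and used as a black box --- so your attempt can only be compared with the classical argument, and that is indeed the route you follow: projective diameter $\Delta(A)=-\log\psi(A)$ of the image cone, the sharp Lipschitz bound $L(A)=\tanh(\Delta(A)/4)$, and the algebraic identity turning this into $(1-\sqrt{\psi(A)})/(1+\sqrt{\psi(A)})$. Your evaluation of $d_H(Ae_k,Ae_\ell)$, the vertex reduction for $\Delta(A)$, the reduction of the upper bound to $t_k=x_k/y_k\in\{m,M\}$ via coordinatewise monotonicity of ratios of affine forms, and the conclusions about continuity and $L(A)<1$ are all sound, though the decisive two-variable optimization that actually produces the factor $\tanh(\Delta/4)$ is only invoked as ``the classical computation'' rather than carried out.

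There is, however, one genuine error: your tightness construction, which takes near-extremal pairs with $M/m\to\infty$, cannot work. Both $Ax$ and $Ay$ lie in the image cone $A\mathbb{R}_{>0}^d$, whose Hilbert diameter is exactly $\Delta(A)=-\log\psi(A)<\infty$, so $d_H(Ax,Ay)\le\Delta(A)$ uniformly in $x,y$, while $d_H(x,y)=\log(M/m)\to\infty$; the quotient therefore tends to $0$, not to $L(A)$. The supremum is approached in the opposite regime $d_H(x,y)\to0$: fixing a quadruple of indices achieving $\psi(A)$, putting $c=1/\psi(A)$ and $s=M/m$, the sharp two-point bound is $2\log\bigl((\sqrt{sc}+1)/(\sqrt{s}+\sqrt{c})\bigr)$, and dividing by $\log s$ and letting $s\to1^{+}$ (l'H\^opital) gives the limit $(\sqrt{c}-1)/(\sqrt{c}+1)=(1-\sqrt{\psi(A)})/(1+\sqrt{\psi(A)})$. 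In other words, $L(A)$ is a local, Lipschitz-constant phenomenon, and the supremum is in general not attained. With the lower bound repaired in this way, and the two-term optimization written out explicitly, your outline becomes the standard proof of Birkhoff's theorem.
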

This was used by Birkhoff to give a new proof of Perron's
theorem that
any positive matrix admits a unique positive eigenvector~\cite[Vol.~2,XIII,\S2]{Gantmacher1959} (and a
generalization in arbitrary dimension).

The key result for our method is the following theorem at the heart of Friedland's proof, that we make explicit for later use.
\begin{theorem}\label{thm:key} If \(A(n)>0\) tends to \(A>0\) as
$n\rightarrow\infty$, let $\lambda>0$ be the positive real eigenvalue
of~$A$ and
\(v>0\) be such that
\(Av=\lambda v\) ($\lambda$ and $v$ exist by Perron's theorem), then
for \(n\) sufficiently
large, \(A(n)B_r(v)\subset B_r(v)\).
\end{theorem}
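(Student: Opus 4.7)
The plan is to upgrade Birkhoff's contraction estimate from the limit matrix $A$ to $A(n)$ for large $n$, using continuity of the contraction ratio $L$ and the fact that $v$ is a fixed direction of $A$.

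First, since $Av = \lambda v$ with $\lambda>0$ and $d_H$ is invariant under positive scalings in each argument, $d_H(Av,v)=0$. Birkhoff's theorem then gives, for every $x \in B_r(v)$,
\[
d_H(Ax, v) \;=\; d_H(Ax, Av) \;\le\; L(A)\, d_H(x,v) \;\le\; L(A)\log r,
\]
and $A>0$ forces $\psi(A)>0$, hence $L(A)<1$, so $A$ already maps $B_r(v)$ strictly inside itself. The remaining task is to propagate this to $A(n)$, for which $v$ is no longer an eigenvector.

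To deal with the missing eigenvector property, I use the triangle inequality to split the error:
\[
d_H(A(n)x, v) \;\le\; d_H(A(n)x, A(n)v) + d_H(A(n)v, v) \;\le\; L(A(n))\log r + d_H(A(n)v, v).
\]
Birkhoff's statement includes continuity of $L$, so $L(A(n))\to L(A)$. Moreover $A(n)v\to Av$ in $\mathbb{R}_{>0}^d$, and the explicit formula for $d_H$ shows it is continuous on $\mathbb{R}_{>0}^d\times\mathbb{R}_{>0}^d$; combined with $d_H(Av,v)=0$, this yields $d_H(A(n)v,v)\to 0$.

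Now I fix $\epsilon>0$ with $L(A)+\epsilon<1$ and take $N$ so large that for every $n\ge N$ both $L(A(n))\le L(A)+\epsilon$ and $d_H(A(n)v,v)\le\tfrac12(1-L(A)-\epsilon)\log r$ hold. Then for every $x\in B_r(v)$,
\[
d_H(A(n)x, v) \;\le\; (L(A)+\epsilon)\log r + \tfrac12(1-L(A)-\epsilon)\log r \;<\; \log r,
\]
which gives $A(n)x\in B_r(v)$ as required. The only real subtlety is the asymmetry between $v$ (fixed by $A$) and $A(n)v$; the triangle-inequality splitting converts that asymmetry into two convergence statements, each immediate from continuity of $L$ and $d_H$ together with $A(n)\to A$, and no delicate spectral estimate on $A(n)$ is needed.
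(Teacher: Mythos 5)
Your argument is correct and follows the same route as the paper's proof: the same triangle-inequality splitting $d_H(A(n)x,v)\le d_H(A(n)x,A(n)v)+d_H(A(n)v,v)$, Birkhoff's bound $L(A(n))\log r$ with continuity of $L$ for the first term, and $d_H(A(n)v,v)=d_H(A(n)v,Av)\to 0$ for the second. Your version merely makes the limiting step explicit with an $\epsilon$; the content is identical.
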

\begin{proof}
Let \(x\in B_r(v)\), then
\[d_H(A(n)x,v)\le d_H(A(n)x,A(n)v)+d_H(A(n)v,v)
\le L(A(n))\log r+d_H(A(n)v,Av).\]
The first summand tends to \(L(A)\log r<\log r\), the second one to~0,
so the sum is smaller than \(\log r\) for \(n\) sufficiently large,
i.e., \(A(n)x\in B_r(v).\)
\end{proof}

Reduction to the positive case is achieved by the following.
\begin{lemma}[Friedland]\label{lemma:Friedland} For a matrix
$A\in\mathbf Q^{d\times d}$ with a simple dominant
eigenvalue~$\lambda>0$, there
exists \(T\in\operatorname{GL}_d(\mathbb Q)\)
such that \(TAT^{-1}\) has positive right and left eigenvectors \(a,b\)
for~\(\lambda\).
\end{lemma}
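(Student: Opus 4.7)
The strategy is to first exhibit a real $T\in\operatorname{GL}_d(\mathbb R)$ such that $Tv>0$ and $\trsp{w}T^{-1}>0$ (componentwise), where $v$ is a right and $\trsp{w}$ a left real eigenvector of $A$ for $\lambda$, and then pass to a rational $T$ by a density argument. Since $\lambda>0$ is real and simple, such real $v,w$ exist, and the biorthogonality relation $\trsp{w}v\neq 0$ holds; rescale so that $\trsp{w}v=1$.

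For the real construction, I would fix positive vectors $a,b\in\mathbb R_{>0}^d$ with $\trsp{b}a=1$ (for instance $a=b=(1/\sqrt d)\trsp{(1,\dots,1)}$) and seek $T$ solving the two-sided linear system
\[Tv=a,\qquad \trsp{b}T=\trsp{w}.\]
The only compatibility between these two sets of equations is $\trsp{b}(Tv)=(\trsp{b}T)v$, i.e.\ $\trsp{b}a=\trsp{w}v$, which holds by construction; a rank computation then gives an affine solution space of dimension $(d-1)^2$ in $\operatorname{Mat}_d(\mathbb R)$. For $d=1$ the unique solution $T=a/v\neq 0$ is invertible; for $d\ge 2$ an invertible $T$ is constructed explicitly by extending $v$ to a basis $v,v_2,\dots,v_d$ and setting $Tv_i:=(\trsp{w}v_i)\,a+u_i$ for $i\ge 2$, where $u_2,\dots,u_d$ is any basis of $b^\perp$. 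Linear independence of $a,Tv_2,\dots,Tv_d$ then follows from $a\notin b^\perp$ (since $\trsp{b}a=1\neq 0$).

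Such a $T$ satisfies $Tv=a>0$ and $\trsp{w}T^{-1}=\trsp{b}>0$, so that $a$ and $\trsp{b}$ are positive right and left eigenvectors of $TAT^{-1}$ for $\lambda$. Finally, the three conditions ``$T$ invertible'', ``$Tv>0$'', ``$\trsp{w}T^{-1}>0$'' are all open in $\operatorname{Mat}_d(\mathbb R)\cong\mathbb R^{d^2}$, so they cut out a non-empty open subset; density of $\mathbb Q^{d^2}$ in $\mathbb R^{d^2}$ then yields a rational $T\in\operatorname{GL}_d(\mathbb Q)$ meeting the conditions of the lemma.

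The main obstacle I expect is justifying the existence of an \emph{invertible} real solution to the two-sided system: the dimension count $(d-1)^2\ge 1$ alone does not rule out $\det T$ vanishing identically on the fiber, so the explicit construction above (or an equivalent non-vanishing argument for the determinant) is needed. The remaining ingredients—existence of real eigenvectors for a real simple eigenvalue, biorthogonality, openness of strict positivity, and density of the rationals—are classical.
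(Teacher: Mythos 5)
Your proof is correct and follows essentially the same route as the paper: prescribe positive targets $a,b$ with $\trsp{b}a=\trsp{w}v$, build an invertible conjugator sending the right eigenvector to $a$ and pulling the left eigenvector back to $b$ using a basis of $b^\perp$, then conclude by openness of the positivity/invertibility conditions and density of $\mathbb Q$. The only cosmetic difference is that you solve the two-sided system $Tv=a$, $\trsp{b}T=\trsp{w}$ directly over $\mathbb R$, whereas the paper factors $T=SQ$ through a block diagonalization $QAQ^{-1}=(\lambda)\oplus B'$ over $\mathbf Q(\lambda)$ before invoking the same continuity-and-density step.
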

Note that in these conditions, \(TA^mT^{-1}/\lambda^m\) tends to
\(a\trsp{b}\) (by \cref{thm:Friedland}) and thus has to be
positive for some finite \(m\). This is where the last part~$m$ of our certificates comes from. As soon as the dimension~$d\ge3$, there are matrices~$A$ for which it is not possible to find a matrix~$T$ such that $TAT^{-1}>0$.
\begin{proof} Friedland's proof of the lemma is constructive (and leaves a lot of freedom in the construction of \(T\)). We reproduce it here to make the algorithmic part of this work self-contained.

Let $\lambda$ be the dominant eigenvalue of~$A$. There exists~$Q\in\operatorname{GL}_d(\mathbf Q(\lambda))$ such that~$B=QAQ^{-1}=(\lambda)\oplus B'$ for some~$B'\in\mathbf{Q}(\lambda)^{(d-1)\times(d-1)}$. If~$e_1$ denotes the vector~$\trsp{(1,0,\dots,0)}$, then $Be_1=\trsp{B}e_1=\lambda e_1$. Choose $a=\trsp{(1,\dots,1)}$ and $b$ a positive vector with coordinates in~$\mathbb{Q}$ such that $\trsp{a}b=1$. Let $(s_2,\dots,s_d)$ be a basis of vectors, all in~$\mathbb{Q}^d$ and orthogonal to~$b$ and form~$S\in\mathbb Q^{d\times d}$, the matrix with columns~$(a,s_2,\dots,s_d)$ so that $Se_1=a$ and $\trsp{S}b=e_1$. Let next $T=SQ$ and $M=TAT^{-1}$. Then,
\begin{align*}
Ma&=SBS^{-1}a=SBe_1=\lambda Se_1=\lambda a,\\
\trsp{M}b&=\trsp{(S^{-1})}\trsp{B}\trsp{S}b=\trsp{(S^{-1})}\trsp{B}e_1=\lambda\trsp{(S^{-1})}e_1=\lambda b.
\end{align*}
By continuity and density of~$\mathbb Q$, one can further restrict to~$T\in\operatorname{GL}_d(\mathbb Q)$.
\end{proof}

\section{Algorithm and proof}\label{sec:algo}
\begin{algorithm}
    \KwIn{A recurrence of Poincaré type, in the form of a matrix $A
    (n)\in\mathbf Q(n)^{d\times d}$; a vector $U_0\in\mathbf Q^d$ of
    initial conditions. It is assumed that $A=\lim_{n\rightarrow\infty}A(n)$ has a unique simple dominant eigenvalue $\lambda$ with  eigenvector~$e>0$. (This can be checked algorithmically.)}
    \KwOut{One of (Positive,$T,r,N,m$), (Non-positive)}  
    \BlankLine
    \If{$\lambda>0$}{
    Find $T\in\mathrm{GL}_d(\mathbb Q)$ and $m>0\in
      \mathbb{N}$
     such that $TA^mT^{-1}>0$ \;
    $v \gets T{e}$;
      \lIf{$v<0$}{ {$T\gets -T$,~ $v\gets -v$ }}
      Find $r>1$ such that $T^{-1}B_r(v)>0$\ {// $B_r(v)$ from \cref{eq:Brv}}\;
     Find $K\geq 0\in\mathbb{N}$ such that $n\geq K\Rightarrow TA({n+m-1})\dotsm A({n})T^{-1}(B_r(v))\subset B_r(v)$\;
    \For{$i=0,\dots,K-1$}{\lIf{$U_i\not\ge0$}{\KwRet(Non-positive)}}
    \For{$i=K,K+1,\dots,\infty$}{
        \lIf{$U_i\not\ge0$}{\KwRet (Non-positive)}
        \If{$TU_j\in B_r(v)$ for $j=K,\dots,K+m-1$}{\KwRet 
        (Positive,$T,r,m,i$)}
        }}
    \Else{
        \For{$i=0,\dotsc$}{\lIf{$U_i\not\ge0$}{\KwRet(Non-positive)}}
    }
    \caption{\PP}\label{PP}
\end{algorithm}



Algorithm \PP\ is a direct consequence of the results of the previous section. We now prove its correctness, thereby proving \cref{theorem:main}.

By Friedland's theorem, for generic initial conditions, the direction of $U_n$ tends to that of the
eigenvector corresponding to the unique dominant eigenvalue $\lambda$
of~$A$, which, being unique, is real. If $\lambda$ is negative, then
for sufficiently large~$n$, one of~$U_n$ and~$A(n)U_n$ has a negative coordinate. This is checked by~Step~14.

When $\lambda>0$, the next step is to compute a  $T\in\operatorname{GL}_d(\mathbb Q)$ and an integer $m>0$ such that $TA^mT^{-1}>0$. This
is possible by \cref{lemma:Friedland}.

Next, by definition of~$e$ and~$v$, $TA^mT^{-1}v=\lambda^mv$. So $v$ is an eigenvector for the positive eigenvalue of a positive matrix. By Perron's theorem, it is a real multiple of a positive vector. Thus either~$v>0$ or $v<0$ and then changing~$T$ into~$-T$ and~$v=Te$ into~$-v$ turns~$v$ into a positive eigenvector of~$TA^mT^{-1}.$ This is what is done in Step~3.

Since $e=T^{-1}v>0$, by continuity of the linear map $T^{-1}$, for sufficiently small $r>0,~T^{-1}B_r(v) >0$.
Such an~$r$ can be computed for instance by starting from~$r=2$ and using dichotomy to divide the distance between~$r$ and~$1$ until an appropriate~$r$ is found. 
This proves that Step~4 succeeds. 

By \cref{thm:key} applied to $TA(n+m-1)\dotsm A(n)T^{-1}$, there exists a~$K$ as required by Step~5. In order to compute it,
one can compute $TA({n+m})\dots A(n)T^{-1}G$ for each generator~$G$ of
the cone~$B_r(v)$, which gives a vector of polynomials in~$\mathbf Q
(\lambda)[n]$ that has to be ultimately positive by the existence
of~$K$. For instance, for each polynomial, one can start from~$i=1$ and check whether the polynomial is positive on~$[i,\infty)$ using Sturm sequences, and if not, double~$i$. In the end, $K$ can be taken as the maximum of the values obtained for each coordinate for each generator~$G$. This proves that Step~5 succeeds.

Step~6 consists simply in checking that the initial values up to~$K$ are nonnegative.

Finally, the termination of Steps~8--11 relies on Friedland's \cref{thm:Friedland}, which shows that for any generic vector of initial conditions, the direction of~$U_n$ tends to that of~$e$ and therefore $d_H(TU_n,v)\rightarrow0$ as $n\rightarrow\infty$. Thus for large enough~$j$, all $TU_j$ belong to~$B_r(v)$, showing that the seemingly infinite loop always terminates for generic initial conditions and concluding the proof.

\subsection*{Example}
For the sequence $(s_n)_{n\in\mathbb{N}}$ from \cref{eq:Straub} in the
introduction, denoting by $\mathrm{S}_n$ the vector $\trsp{(s_{n},s_
{n+1})}$, the recurrence is $\mathrm{S}_{n+1}=A(n)\mathrm{S}_{n}$ with 
\[A(n)=\begin{pmatrix} 0 &1\\
\frac{-81(3n +2)(3n + 4)}{2(n + 2)^2}&\frac{(81n^2 + 243n + 186)}{2(n + 2
)^2}
\end{pmatrix}\qquad\text{and}\qquad \mathrm{S}_0=
\begin{pmatrix} 
1\\ 12
\end{pmatrix}.\]
The limit matrix $A$ of $A(n)$ has one simple dominant eigenvalue $\lambda=27>0$ with $e=(1,\lambda)$ its associated eigenvector. We follow the steps of the algorithm.

First, following the steps in the proof of \cref{lemma:Friedland}, a possible choice of matrix is
\[T=\frac{1}{13} \begin{pmatrix}-14&1
\\
1&0
\end{pmatrix}.\]
Since  $TAT^{-1}>0$, we have $m=1$ and note that the vector $v=Te=(1,1)>0$.

Next, as the inverse of $T$ is triangular with positive elements below
the anti-diagonal, for all real $r>0$,  $T^{-1}B_r(v)>0$, showing that we can take $r=+\infty$. In Step 5, $K$ can be chosen as the rank for which the matrix  $TA(n)T^{-1}$ becomes positive. The value of this matrix is
\[
\begin{pmatrix}
\frac{53n^2+131n+74}{2(n+1)^2} & \frac{13n^2+376n+388}{26(n+1)^2}\\
13 & 14
\end{pmatrix},\]
showing that $K=0$ works.
After checking that $U_0$, $U_1$ and $TU_1$ are positive, the positivity is concluded and the final step of the algorithm finds $N=1$.

Note that this choice of matrix~$T$ means that the algorithm proves the positivity of $(s_n)_{n\in\mathbb{N}}$ by synthesizing and proving the  inequalities
\[s_{n+1}>14s_{n}>0,\qquad\text{for $n\ge1$.}\]
In this example, this
recovers the stragegy of Kauers and Pillwein~\cite{KauersPillwein2010}
of looking for an inequality~$u_{n+1}\ge\mu u_n$. One way of seeing
the improvement brought by our algorithm is that it will always succeed in producing a matrix~$T$ when the conditions of \cref{theorem:main} are met, while the inequalities~$u_{n+1}\ge\mu u_n$ correspond to a restricted set of band matrices. 

\section{Conclusion and Future Works}
Informally speaking, this work shows that positivity certificates can be computed for
a large class of P-finite sequences, whose positivity follows `in an
easy way' from their asymptotic behaviour. These certificates can be
viewed as a finite set of linear inequalities satisfied by $u_n,u_
{n+1},\dots,u_{n+d}$, whose simultaneous proof by induction implies the positivity of~$(u_n)_{n\in\mathbb N}$ and reduces to checking the positivity of a finite number of univariate polynomials.

We wrote a prototype implementation in Maple. 
In the current implementation, the choice of the matrix~$T$ is performed following the proof of \cref{lemma:Friedland}. This is far from optimal. The more difficult examples of this article use a matrix~$T$ obtained using several heuristics that have not been implemented yet. In practice, the choice of~$T$ has a strong impact on the value of the number~$N$ of terms that have to be tested positive. In the first example of \cref{subsec:example}, which has order~4, the matrix~$T$ we give leads to~$m=1$ and~$N=3$, so that checking the certificate is easy. By contrast, currently our implementation gives~$m=5$ but, more importantly, the large value $N=2420861$ which prevents checking the certificate in a reasonable time.

The constraint that the matrix~$A$ in \cref{theorem:main} has a unique
dominant eigenvalue that is simple is a limitation of our approach.
For instance, the sequence proved positive by Melczer and Mezzarobba~%
\cite{MelczerMezzarobba2022} by a purely analytic method has a
dominant eigenvalue
that is double and thus inaccessible by the approach presented here. There are cases where our approach seems to extend to a unique  dominant eigenvalue that is not simple. Also, the analytic approach cannot cope with parameters while our method seems to extend to this situation, at least in simple cases. We plan to explore this further. 

The condition of genericity of the initial conditions in \cref{theorem:main} and \cref{cor:recurrences} ensures that the solution is not subdominant and thus behaves asymptotically like the largest one, which is unique by our assumption on the dominant root of the characteristic polynomial.
For recurrences of order~2, subdominance is equivalent to minimality. Exploiting the relation between second order linear recurrences and continued fractions, Kenison \emph{et al.}~\cite{KenisonKlurmanLefaucheuxLucaMoreeOuaknineWhitelandWorrell2021} obtain a Turing reduction of positivity to minimality for the order~2 in two steps: first, they give an algorithm that terminates except in the minimal situation; next, in that non-generic situation, they provide a decision method via continued fractions. For recurrences of higher order, our work generalizes only the first part: if the recurrence satisfies the conditions of \cref{cor:recurrences} and in particular the characteristic polynomial has a unique dominant root, then our algorithm answers the positivity question but may fail to terminate in the non-generic situation. However, this does not constitute a reduction to the problem of deciding genericity of the initial conditions, as we do not know how to decide positivity in the non-generic situation in all generality. One possible direction of extension would be deal with special situations first, as for instance when all roots of the characteristic polynomial have distinct absolute values.

\bigskip

\paragraph{\bf Acknowledgements.} 
Alin Bostan and Mohab Safey El Din made many very useful suggestions
on previous versions of this article. The presentation also benefited from the feedback of the participants to the workshop \emph{Algorithmic Aspects of Dynamical Systems} at McGill University’s Bellairs Research Institute. We are also thankful to the referees whose comments helped us clarify some of our statements.

This work has been supported in part by the ANR project NuSCAP ANR-20-CE48-0014.

\printbibliography

\end{document}